\documentclass[11pt]{article}

\usepackage{fullpage}
\usepackage[utf8]{inputenc}
\usepackage[T1]{fontenc}

\usepackage{amsmath, amssymb, amsfonts}
\usepackage{amsthm}
\usepackage{graphicx}

\usepackage[colorlinks=true, allcolors=blue]{hyperref}

\usepackage{mathtools} 
\usepackage{array}     

\usepackage{subcaption}
\usepackage{float}
\usepackage{xcolor}
\usepackage{transparent}
\usepackage{graphicx}
\usepackage{enumitem}
\usepackage{orcidlink}

\graphicspath{{.}{Fig}}

\title{Quadratic polarity and polar Fenchel-Young divergences from the canonical Legendre polarity}

\author{
Frank Nielsen\thanks{Equal contribution.}~\orcidlink{0000-0001-5728-0726}\\
Sony Computer Science Laboratories Inc.\\
Tokyo, Japan
\and Basile Plus-Gourdon\footnotemark[1]\footnote{This work was carried out while Basile Plus-Gourdon was during his internship at National Institute for Informatics (NII), Tokyo, Japan}\\
\'Ecole Normale Sup\'erieure\\
Paris-Saclay, France
\and Mahito Sugiyama~\orcidlink{0000-0001-5907-9831}\\
National Institute for Informatics\\
Tokyo, Japan}

\date{}

\newtheorem{definition}{Definition}
\newtheorem{theorem}{Theorem}
\newtheorem{property}{Property}

\def\calT{\mathcal{T}}
\def\tD{{\mathrm{tD}}}
\def\bbP{{\mathbb{P}}}
\def\bbR{{\mathbb{R}}}

\def\GL{\mathrm{GL}}
\def\graph#1{{\mathrm{graph}(#1)}}
\def\epi#1{{\mathrm{epi}(#1)}}
\def\calL{\mathcal{L}}
\def\calM{\mathcal{M}}
\def\inner#1#2{{\langle #1, #2 \rangle}}
\def\natural#1#2{{(#1,#2)}}
\def\eqdef{{:=}}
\def\barbbR{\bar{\bbR}}
\def\st{\ :\ }

\begin{document}

\maketitle

\begin{abstract}
Polarity is a fundamental reciprocal duality of $n$-dimensional projective geometry which associates to points polar hyperplanes, and more generally $k$-dimensional convex bodies to polar $(n-1-k)$-dimensional convex bodies.
It is well-known that the Legendre-Fenchel transformation of functions can be interpreted from the polarity viewpoint of their graphs using an extra dimension.
In this paper, we first show that generic polarities induced by quadratic polarity functionals  can be expressed either as deformed Legendre polarity or as the Legendre polarity of deformed convex bodies, and be efficiently manipulated using linear algebra on $(n+2)\times (n+2)$ matrices operating on homogeneous coordinates.
Second, we define polar divergences using the Legendre polarity and show that they generalize the Fenchel-Young divergence or equivalent Bregman divergence.
This polarity study brings new understanding of the core reference duality in information geometry. 
Last, we show that the total Bregman divergences can be considered as a total polar Fenchel-Young divergence from which we newly exhibit the reference duality using dual polar conformal factors.
\end{abstract}

\noindent Keywords: Legendre transform; convex duality; projective geometry; polar; quadratic polarity; Fenchel-Young divergence; total Bregman divergence; optimal transport with quadratic cost.

\section{Introduction and contributions}

Let $\bbR^n$ denote the $n$-dimensional real vector space and $\bbR_n=(\bbR^n)^*$ its dual vector space of linear functions (covectors).

The Legendre-Fenchel transform~\cite{fenchel_conjugate_1949} is a fundamental conjugation operation in convex analysis. 
For a closed convex and proper function $F: \Theta \subset \bbR^n \to \barbbR$, its conjugate $F^*: H\subset\bbR^n \to \bar{\bbR}$ is defined as the supremum of linear functions $L_\theta(\eta):=\inner{\theta}{\eta}-F(\theta)$ where $\inner{\theta}{\eta}=\sum_{i=1}^n \theta_i\eta_i$ is the scalar product:

\begin{equation}\label{eq:LF_transform}
F^*(\eta) := (\calL F)(\eta) := \sup_{\theta \in \Theta} L_\theta(\eta).
\end{equation}

Since $F^*$ is the supremum of linear functions, it is necessarily convex and $\calL$ thus convexify functions.
Furthermore the biconjugation of a function $F$ lower bounds the function (i.e., ${F^*}^*\leq F$) with equality if and only if $F$ is convex and lower semi-continuous
 (Fenchel-Moreau's biconjugation theorem~\cite{borwein2006convex}).
When $F$ is of Legendre-type~\cite{rockafellar1967conjugates}, the convex conjugate is given by
$$
F^*(\eta) = \inner{(\nabla F)^{-1}(\eta)}{\eta}-F((\nabla F)^{-1}(\eta)),
$$
since the unique optimum value of Eq.~\ref{eq:LF_transform} is obtained for $\eta=\nabla F(\theta)$, or equivalently $\theta=\nabla F^{-1}(\eta)=\nabla F^*(\eta)$.
Historically, Legendre~\cite{Legendre1789} unravelled this dual parameterization $\theta\Leftrightarrow\theta$ in the calculus of variations in the 1D case.
Fenchel~\cite{fenchel_conjugate_1949} considered multivariate functions (not necessarily smooth) and relied on the Fenchel-Young inequality to define a wider scope of  convex duality using the notion of subdifferential and subgradient.
See~\cite{kiselman_werner_2019} for a survey of this transformation $\calL$.
The quadratic paraboloid function $Q(\theta)=\frac{1}{2}\sum_{i=1}^n \theta_i^2$ is the only function satisfying $Q=\calL Q=Q^*$~\cite{bauschke2012fenchel}.

The Legendre-Fenchel transformation serves as a cornerstone in diverse fields, ranging from classical mechanics where it facilitates the transition between Lagrangian and Hamiltonian formalisms~\cite{leok2017connecting}, to information geometry where it defines the dual coordinate systems of flat manifolds~\cite{amari_information_2016}.

The paper presents our contributions organized as follows: 

In \S\ref{sec:proggeo}, we recall the necessary background notions of (epi)graphs and projective geometry.
We then describe the Legendre-Fenchel transformation from the viewpoint of polarity in \S\ref{sec:LFTPolarity}:
Namely, we recall the seminal result of Werner Fenchel~\cite{fenchel_conjugate_1949} that the boundary of the Legendre polarity of the graph of a function  coincides with the graph of its convex conjugate (Proposition~\ref{prop:LegPolarity}).
In \S\ref{sec:polartransform}, we study several properties of generic quadratic polarities and exhibit two new identities of the polar Legendre transformations under transformations:
First, we show that an arbitrary quadratic polarity can be equivalently interpreted as convex body transformation $T$ 
of the Legendre polarity in Theorem~\ref{th:artstein_avidan_generalization_projective}.
Second,  we prove that a quadratic polarity is equivalent to the Legendre polarity after applying a deformation $S$ on the convex body in Theorem~\ref{th:nielsen_generalization_projective}.
The relationships between the transformations $T$ and $S$ are given in Proposition~\ref{prop:relationsTS}.
We define the Fenchel-Young polar divergence (Definition~\ref{def:breg_div}) in \S\ref{sec:polarfy} as a generalization of the Fenchel-Young divergence~\cite{acharyya2013learning} and recover its counterpart properties:
Namely, the non-negativeness and the duality under swapping in Proposition~\ref{prop:swap_bregman_arguments}.
In \S\ref{sec:polartbd}, we further define the total Fenchel-Young divergence (Definition~\ref{def:totalbreg_div}) which
is a generalization of the total Bregman divergence~\cite{vemuri_total_2011}.
We prove a duality identity under parameter swapping for the total Fenchel-Young divergence in Theorem~\ref{thm:swaptbD}.
Finally, we conclude with a discussion in \S\ref{sec:concl}.
Some of the proofs are deferred to the Appendix section in \S\ref{sec:appendix}.

\section{Background: (Epi)graphs, and projective geometry}\label{sec:proggeo}

Let $F:\Theta\subset\bbR^n \rightarrow \barbbR$ be an extended real-valued function.
The epigraph of $F$ is 
$$
\epi{F}=\{ (\theta,y)\in \Theta\times\bbR \st y\geq F(\theta) \}\subset\bbR^{n+1},
$$
and the graph of $F$ is $\graph{F}=\{(\theta,F(\theta)) \st \theta\in\Theta\}=\partial\epi{F}$, the boundary of the epigraph (where $\partial$ denotes the boundary operator).

In this work, we shall consider sets of $\bbR^{n+1}$ like the epigraphs of functions as basic objects of study.
Those sets are considered in the projective space~\cite{richter2011perspectives} $\bbP^{n+1}$ which extends $\bbR^{n+1}$ by adding ideal points at infinity.
$\bbP^{n+1}$ is defined as the quotient space $(\bbR^{n+2} \setminus \{0\})/\sim$ where the equivalence relation $\sim$ is defined as follows:
\begin{equation}\label{eq:equivalence_relation}
[a]\sim [b] \iff \exists \lambda \neq 0 \st [b]= \lambda\, [a].
\end{equation}
The vector components $a_1,\ldots, a_{n+2}$ of $[a]$ are called the homogeneous coordinates~\cite{nielsen2005visual}.
A vector $a$ of $\bbR^{n+1}$ can be considered as a vector $[a]$ of $\bbP^{n+1}$:
\begin{equation}\label{eq:relation_homogeneous_coordinates}
a=\begin{bmatrix} 
    a_1 \\ \vdots \\ a_{n+1} 
\end{bmatrix} \in\bbR^{n+1} \leftrightarrow [a] = \begin{bmatrix} 
    a_1 \\ \vdots \\ a_{n+1} \\ a_{n+2}=1 
\end{bmatrix} \in \bbP^{n+1}.
\end{equation}
Projective ideal points at infinity are characterized by $a_{n+2}=0$.
Conversely a non-ideal projective point $[a]$ (i.e., not lying at infinity) can be considered as an ordinary vector by dehomogeneization:
$$
[a] = \begin{bmatrix} 
    a_1 \\ \vdots \\ a_{n+2} \\ 1 
\end{bmatrix} \in \bbP^{n+1}, a_{n+2}\not=0 \Rightarrow  a=\begin{bmatrix} 
    \frac{a_1}{a_{n+2}} \\ \vdots \\ \frac{a_{n+1}}{a_{n+2}} 
\end{bmatrix} \in\bbR^{n+1}.
$$

Notice that graphs of $n$-variate functions in the projective space $\bbP^{n+1}$ are thus handled using homogeneous coordinates of $\bbR^{n+2}$ as depicted in Figure~\ref{fig:projective_space_homogeneous_coordinate_explained}.

\begin{figure} 
	\centering
  \def\svgwidth{0.5\columnwidth}%
  \graphicspath{{.}}%
\begingroup%
  \makeatletter%
  \providecommand\color[2][]{%
    \errmessage{(Inkscape) Color is used for the text in Inkscape, but the package 'color.sty' is not loaded}%
    \renewcommand\color[2][]{}%
  }%
  \providecommand\transparent[1]{%
    \errmessage{(Inkscape) Transparency is used (non-zero) for the text in Inkscape, but the package 'transparent.sty' is not loaded}%
    \renewcommand\transparent[1]{}%
  }%
  \providecommand\rotatebox[2]{#2}%
  \newcommand*\fsize{\dimexpr\f@size pt\relax}%
  \newcommand*\lineheight[1]{\fontsize{\fsize}{#1\fsize}\selectfont}%
  \ifx\svgwidth\undefined%
    \setlength{\unitlength}{595.27559055bp}%
    \ifx\svgscale\undefined%
      \relax%
    \else%
      \setlength{\unitlength}{\unitlength * \real{\svgscale}}%
    \fi%
  \else%
    \setlength{\unitlength}{\svgwidth}%
  \fi%
  \global\let\svgwidth\undefined%
  \global\let\svgscale\undefined%
  \makeatother%
  \begin{picture}(1,0.61904762)%
    \lineheight{1}%
    \setlength\tabcolsep{0pt}%
    \put(0,0){\includegraphics[width=\unitlength,page=1]{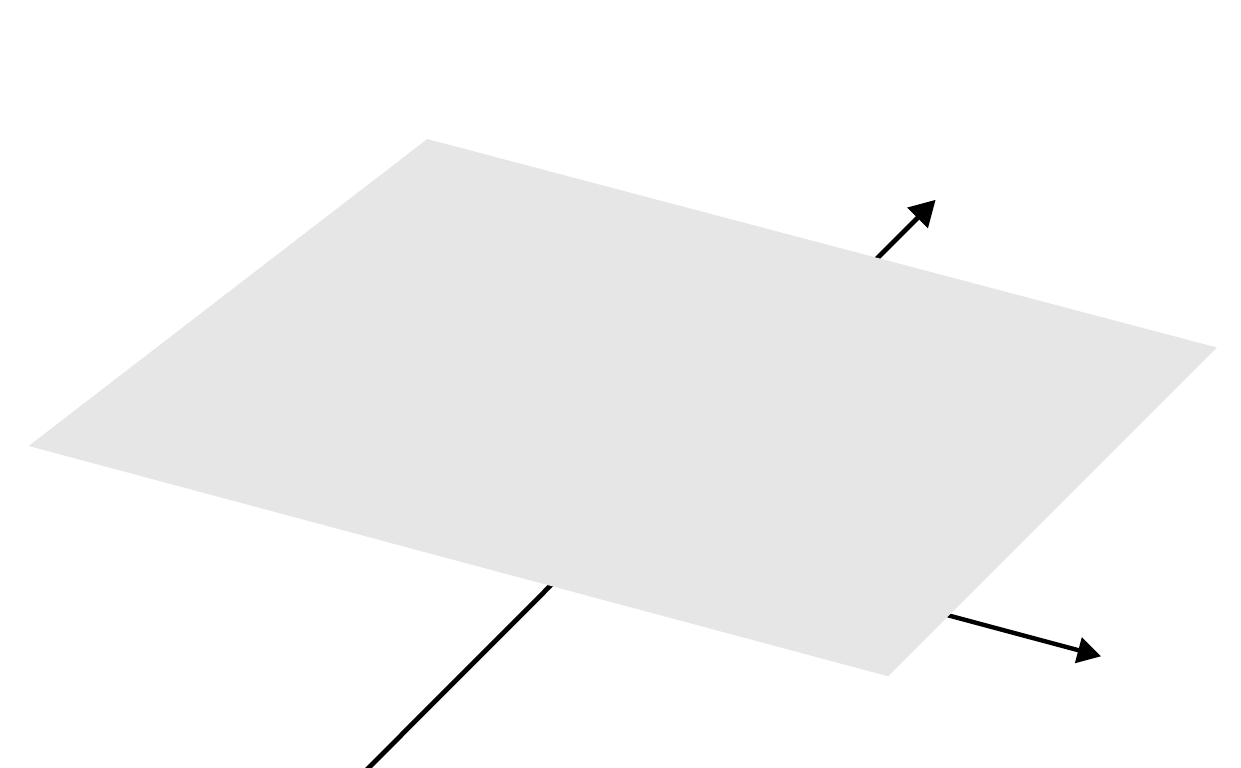}}%
    \put(0.26008688,0.37065346){\color[rgb]{0,0,0}\rotatebox{34.860153}{\makebox(0,0)[lt]{\lineheight{1.25}\smash{\begin{tabular}[t]{l}$z=1$\end{tabular}}}}}%
    \put(0.81082571,0.07861874){\color[rgb]{0,0,0}\rotatebox{-15}{\makebox(0,0)[lt]{\lineheight{1.25}\smash{\begin{tabular}[t]{l}$x$\end{tabular}}}}}%
    \put(0.75734732,0.41264007){\color[rgb]{0,0,0}\makebox(0,0)[lt]{\lineheight{1.25}\smash{\begin{tabular}[t]{l}$y$\end{tabular}}}}%
    \put(0.52202759,0.57109834){\color[rgb]{0,0,0}\makebox(0,0)[lt]{\lineheight{1.25}\smash{\begin{tabular}[t]{l}$z$\end{tabular}}}}%
    \put(0,0){\includegraphics[width=\unitlength,page=2]{projective_space_homogeneous_coordinate_explained.pdf}}%
    \put(0.55802616,0.25871735){\color[rgb]{0,0,0.66666667}\makebox(0,0)[lt]{\lineheight{1.25}\smash{\begin{tabular}[t]{l}$a=(x,F(x)) \in \mathbb R ^2$\end{tabular}}}}%
    \put(0.70249619,0.55399384){\color[rgb]{1,0.4,0}\makebox(0,0)[lt]{\lineheight{1.25}\smash{\begin{tabular}[t]{l}$[a]=[x,F(x),1] \in \mathbb P^2$\end{tabular}}}}%
  \end{picture}%
\endgroup%

		\caption{A $n$-variate function is represented by its $(n+1)$-dimensional epigraph (convex body) and manipulated using homogeneous coordinates of $\bbR^{n+2}$.}

		\label{fig:projective_space_homogeneous_coordinate_explained}
	\end{figure}

Homogeneous coordinates are useful to perform affine transformations and collineations (homographies) of projective vectors and often used in computer graphics and computer vision~\cite{nielsen2005visual}.
In projective geometry, there is a duality between points and lines (e.g., meet/join algebra~\cite{richter2011perspectives}).

\section{Legendre-Fenchel transformation from the viewpoint of polarity}\label{sec:LFTPolarity}

\subsection{Quadratic polarity and Legendre polarity}

In general, a polarity~\cite{hoffmann_recent_1988,Boroczky2008} $\Delta$ is a mapping  which satisfies the following property:
$$
\Delta(\cup_{i\in I} A_i) = \cap_{i\in I} \Delta(A_i),
$$ 
for any family $\{A_i\}_{i \in  I}$ such that $A_i \subseteq \bbP^{n+1}$.

In the affine space $\bbR^{n+1}$, a polarity can always be described using a polarity functional 
$p:\bbR^{n+1} \times  \bbR_{n+1} \to \barbbR$ (Theorem 1.1~\cite{hoffmann_recent_1988})  by
 
\begin{equation}\label{eq:polarity_from_functional}
\Delta(A) := \left\{ [b] \in  \bbR_{n+1}) \st \forall [a] \in A,\ p(a,b) \geq 0 \right\}.
\end{equation}

We consider polarities $\Delta_C$ induced by a cost matrix $C\in\GL(n+2)$ (not necessarily symmetric) such that
\begin{equation}\label{eq:functional_from_C}
p(a,b) \eqdef [a]^\top\, C\, [b].
\end{equation}

The polar of  a point $[a]$ is a halfspace: 
$$
H_{[a]}= \Delta_C([a]) = \{ [b] \in \bbR_{n+1} \, \mid\,  [a]^{\top}\, C\, [b] \geq 0\}.
$$
		
The polar of the graph of a function $F$ yields a set. 
When $F$ is closed, the polarity gives the same image as the epigraph of $F$: 
		$$
    \Delta_C(\graph{F}) = \Delta_C(\epi{F}).
    $$
		
Notice that a set can always be considered as the intersection of halfspaces:
    $$
    \Delta_C(A) = \bigcap_{[a] \in A} H_{[a]}.
    $$

We can similarly define the dual polarity $\Delta^*$ as 
\begin{equation}
\Delta^*(B):=\{[a] \in \bbR^{n+1} \mid \forall [b] \in B, [b]^\top\, C^\top\, [a] \geq 0 \}.
\end{equation}

\begin{property}[Polarity involution]
Let $\Delta$ be the polarity associated with a non-degenerate matrix 
$C \in \GL(n+2)$. Then for any closed convex set
$A \subset \bbR^{n+1}$ we have
$$
\Delta_C^*(\Delta_C(A)) = A.
$$
\end{property}

We refer to Appendix~\ref{sec:appendix} for a proof.

The Legendre-Fenchel transformation can be defined by the {\em Legendre polarity}~\cite{fenchel_conjugate_1949, hoffmann_recent_1988, martinez2015contribution,Edelsbrunner-2018} $\Delta_{\calL}$ where
\begin{equation}\label{eq:LF_cost_matrix}
C = C_{\calL} := \begin{bmatrix}
            -I_n & 0 & 0 \\
            0 & 0 & 1 \\
            0 & 1 & 0
        \end{bmatrix},
\end{equation}
where $I_n$ denotes the $n\times n$ identity matrix.
Note that since $C_{\calL}$ is symmetric, we get self-dual Legendre polarity: $\Delta_{\calL}=\Delta_{\calL}^*$.

In Appendix~\ref{sec:pcpolar}, we consider another kind of polarity which maps the  parabola to the unit sphere.

\subsection{Legendre-Fenchel transformation from Legendre polarity}

We now relate the Legendre polarity with the Legendre transformation (see~\cite{fenchel_conjugate_1949}):
Namely, the boundary of the Legendre polarity of the graph of a function $F$ coincides with the graph of its convex conjugate $F^*$:

\begin{property}[Legendre transformation from Legendre polarity]\label{prop:LegPolarity}
Let $F:\Theta \to  \bbR$ be a real-valued function and $F^{*}:H \to \bbR$ be its Legendre convex conjugate.
Then it holds that
\begin{equation}
\partial \Delta_{\calL}(\graph{F}) = \graph{F^{*}}.
\end{equation}
\end{property}

\begin{proof}
    Let $[b] = \begin{bmatrix} 
        \eta^\top & y^{*} & 1 
    \end{bmatrix}^\top  \in \Delta_{\calL}(\epi{F})$, then by definition of $\Delta_{\calL}$, 
		
    \begin{align*}
        \forall \theta \in  \Theta, p \left(\begin{bmatrix} 
        \theta^\top & F(\theta) & 1 
    \end{bmatrix}^\top, \begin{bmatrix} 
        \eta^\top & y^{*} & 1 
    \end{bmatrix}^\top  \right) \geq 0 & \iff
    \forall \theta \in  \Theta, \begin{bmatrix} 
        \theta^\top & F(\theta) & 1 
    \end{bmatrix} C_{\mathcal{L}} \begin{bmatrix} 
        \eta^\top & y^{*} & 1 
    \end{bmatrix}^\top \geq 0, \\
    & \iff \forall \theta \in  \Theta, y^{*} \geq \theta^\top \eta - F(\theta), \\
    & \iff y^{*} = \sup_{\theta \in \Theta}(\theta^\top \eta - F(\theta)):=F^*(\eta).
    \end{align*}

    The supremum exists and is reached because $F$ is closed and convex. 
		So $[b] \in  \Delta_{\calL}(\epi{F}) \iff [b] \in  \epi{F^*}$.
\end{proof}

    \begin{figure}
        \centering
        \includegraphics[width=0.5\textwidth]{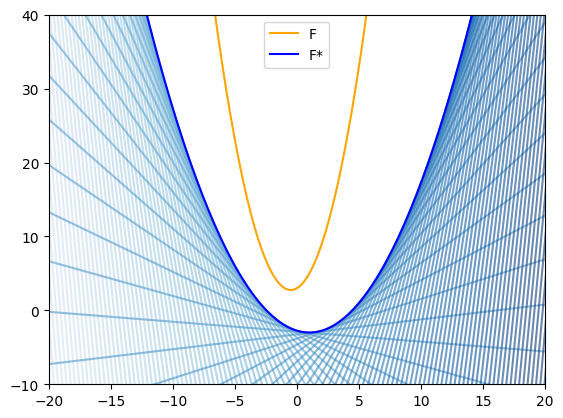}
				
        \caption{The graph of the Legendre transform $F^{*}$ of a function $F$ is the envelope of a family of polar hyperplanes called. 
				Each hyperplane in blue is associated to a point $(\theta,F(\theta)) \in  \graph{F}$.
				Here, we consider $F(\theta)=\theta^{2}+\theta+3$ with convex conjugate $F^*(\eta)=\frac{1}{4}(\eta^2-2\eta-13)$.}
        \label{fig:LF_as_envelope}
    \end{figure}

\subsection{Duality and properties of polarities}

		Each point $[b] \in  \bbR_{n+1}$ can be represented as one hyperplane $H_{[b]}$ in $\mathbb{R}^{n+1}$ given by 
\begin{equation}\label{eq:polar_to_b}
H_{[b]} = \{[a] \in  \mathbb{R}^{n+1} \mid [a]^\top C [b]=0\}.
\end{equation}

This hyperplane is called the polar to $[b]$. Similarly, the polar of a point $[a] \in \mathbb{R}^{n+1}$ is a hyperplane of $\bbR_{n+1})$.
\begin{equation}
    H_{[a]} = \{[b] \in \bbR_{n+1} \mid [b]^\top C^\top [a] =0 \}.
\end{equation}

The polar of a point $[a]$ is the hyperplane delimiting the half-space $\Delta([a])$.
Figure~\ref{fig:polar_and_polarity_duality_projective_space} illustrates the reciprocal duality of polarities.

\begin{figure}[H]
		\centering

  \def\svgwidth{0.5\columnwidth}%
  \graphicspath{{.}}%
  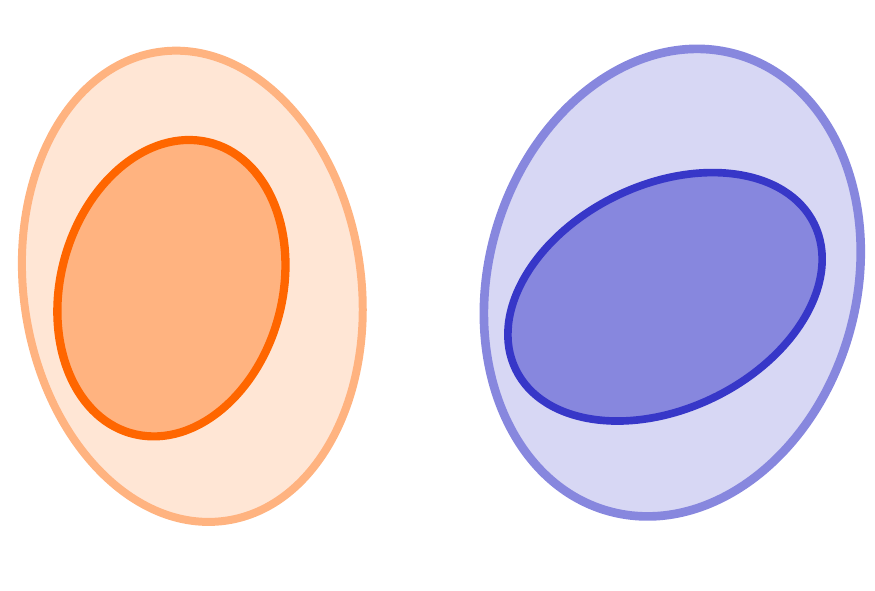%

				\caption{Illustration of the quadratic polarity $\Delta_C : \mathbb{R}^{n+1} \to \mathbb{R}_{n+1}$ induced by $(n+2)\times (n+2)$ matrix $C$:
				Polarity maps a set $A \subset \mathbb{R}^{n+1}$ to its dual  $\Delta_C(A) \subset \mathbb{R}_{n+1}$.
				 A point $[b]$ in $\mathbb{R}_{n+1}$ can be represented in $\mathbb{R}^{n+1}$  as a hyperplane $H_{[b]}$ of normal vector $C\, [b]$.
				}
				

		\label{fig:polar_and_polarity_duality_projective_space}
	\end{figure}

\begin{property}
    Let $A \subset \mathbb{R}^{n+1}$,
    \begin{equation}
    \forall [a] \in A, \quad \Delta(A) \subseteq \Delta([a]).
    \end{equation}
\end{property}

The proof is immediate given the definition of a polarity. Geometrically, it means that the dual set $\Delta(A)$ lies on the ``good side'' of the hyperplane \textit{polar} to $[a]$ and the polar of $[a]$ do not intersect the interior of $\Delta(A)$.

\begin{definition}
    [Supporting hyperplane]
    Given a normal vector $[n] \in  \mathbb{R}^{n+1}$, let $H$ be a hyperplane defined  by $H=\{[x] \in  \mathbb{R}^{n+1} \mid [x]^\top[n]=0\}$ and $A$ a properly convex set. $H$ is called \textit{supporting} hyperplane of $A$ at the point $[a_0] \in \partial A$ iff 
    \begin{itemize}
        \item (point $[a_0]$ belongs to the hyperplane) : $[a_0]^\top [n] = 0$
        \item (the set lies on one side of the plane) : $\forall [a] \in A$, $[a]^\top[n] $ is of constant sign
    \end{itemize}
\end{definition}

\begin{property}
    Let $A \subset \mathbb{R}^{n+1}$ be a closed convex set.
    \begin{equation}
    \forall [b] \in \partial \Delta(A), \quad H_{[b]} \text{ is a supporting hyperplane of } A.
    \end{equation}
\end{property}

\begin{proof}
    Let $[b] \in  \partial \Delta(A)$, by definition $H_{[b]} = \{[a] \in  \mathbb{R}^{n+1} \mid [a]^\top C [b]=0\}$. The normal vector of the hyperplane associated with $[b]$ is $C[b]$. As $[b] \in  \partial \Delta(A)$ there exists a $[a_0] \in \partial A$ such that $[a_0]^\top C [b]=0$. This gives the first condition for $H_{[b]}$ to be a supporting hyperplane. Next because $[b] \in  \Delta(A)$, we know that $\forall [a] \in  A, \quad [a]^\top C [b]\geq 0$. So $H_{[b]}$ is a supporting hyperplane of $A$ at point $[a_0]$    
\end{proof}

Geometrically, when the set $A$ is strictly convex and smooth, the boundary of the polar set $\Delta(A)$ is the envelope of the family of hyperplanes polar to each point $[a] \in \partial A$. Those extra assumptions are required to define the envelope of the supporting hyperplane : Smoothness guarantees that the sub-differential on the boundary is single valued and that we can parametrize the family of supporting hyperplanes while strict convexity ensures that the dual set $\Delta(A)$ is smooth avoiding a singularity on the dual boundary.

\begin{property}[Boundary of polar set is the envelope of polars] \label{prop:boundary_is_envelope_sets}
    Let $A \subset \mathbb{R}^{n+1}$ be a closed and convex set with smooth boundary $\partial A$. Let $a: \Theta \to \partial A$ be a parametrization of the boundary.
    
    The boundary of the polar set $\Delta_C(A)$ is exactly the envelope of the hyperplanes polar to the points of $\partial A$:
    \begin{equation}\label{eq:boundary_is_envelope}
    \partial \Delta_C(A) = \left\{ [b] \in \mathbb{R}_{n+1}) \st \exists \theta \in \Theta, \begin{cases} 
    [a(\theta)]^\top C [b] = 0 & \text{(Incidence)} \\
    \nabla_\theta [a(\theta)]^\top C [b] = 0 & \text{(Tangency)}
    \end{cases} \right\}
    \end{equation}
\end{property}

\begin{proof}
    Recall that the polar set of $A$ is defined as the intersection of half-spaces defined by points in $A$ (or equivalently its boundary $\partial A$):
    \begin{equation*}
    \Delta_C(A) = \left\{ [b] \in (\mathbb{R}^{n+1})^{*} \mid \forall [a] \in A, [a]^\top C [b] \geq 0 \right\}
    \end{equation*}
    Let $E$ denote the envelope set defined in the property. We prove the equality by double inclusion.

    \paragraph{1. $\partial \Delta_C(A) \subseteq E$}
    Let $[b] \in \partial \Delta_C(A)$. By definition of the boundary of a convex set, the inequality constraint must be active (tight) for at least one point in the primal set. Thus, there exists a parameter $\theta^* \in \Theta$ such that:
    \[
    f(\theta^*) := [a(\theta^*)]^\top C [b] = 0
    \]
    Since $[b] \in \Delta_C(A)$, we must have $f(\theta) = [a(\theta)]^\top C [b] \geq 0$ for all $\theta \in \Theta$.
    Consequently, $\theta^*$ is a global minimum of the smooth function $f(\theta)$. The gradient with respect to $\theta$ must vanish at this optimum:
    \[
    \nabla_\theta f(\theta^*) = \nabla_\theta [a(\theta^*)]^\top C [b] = 0
    \]
    The point $[b]$ satisfies both the incidence condition ($f(\theta^*)=0$) and the tangency condition ($\nabla f(\theta^*)=0$). Therefore, $[b] \in E$.

    \paragraph{2. $E \subseteq \partial \Delta_C(A)$}
    Let $[b] \in E$. By definition of the envelope, there exists a parameter $\theta_0$ such that:
    \begin{enumerate}
        \item $a(\theta_0)^\top C [b] = 0$
        \item $\nabla_\theta a(\theta_0)^\top C [b] = 0$
    \end{enumerate}
    To show $[b] \in \partial \Delta_C(A)$, we must prove two things:
    \begin{itemize}
        \item[(i)] $[b] \in \Delta_C(A)$ (Global validity): $[a(\theta)]^\top C [b] \geq 0$ for all $\theta$.
        \item[(ii)] $[b]$ is on the boundary: This is immediate since $a(\theta_0)^\top C [b] = 0$.
    \end{itemize}
    
    The tangency condition (2) implies that the supporting hyperplane $H_{[b]} = \{ [x] \mid [x]^\top C [b] = 0 \}$ is a tangent hyperplane to the boundary $\partial A$ at the point $a(\theta_0)$.
    
    Since $A$ is assumed to be convex, the set lies entirely on one side of any tangent hyperplane.
    This implies that $\theta_0$ corresponds to a global minimum of the scalar product, so:
    \[
    [a(\theta)]^\top C [b] \geq a(\theta_0)^\top C [b] = 0, \quad \forall \theta \in \Theta
    \]
    Thus, $[b]$ satisfies the polar inequalities for all boundary points (and by convexity, for all points in $A$), proving $[b] \in \Delta_C(A)$. Since the inequality is tight at $\theta_0$, $[b]$ lies on the boundary $\partial \Delta_C(A)$.
\end{proof}

If $A$ was assumed \textit{strictly} convex we would have a one-to-one mapping between the polar hyperplane and the point on boundary $\partial \Delta_C(A)$. The Equation \ref{eq:boundary_is_envelope} can thus be modified to say that for a given $[b] \in \mathbb{R}_{n+1}$, $\exists ! \theta \in  \Theta, \ldots $

The polarity plays an important role in projective geometry:
For example, the Pascal's theorem is the polar reciprocal and projective dual of Brianchon's theorem~\cite{richter2011perspectives}.
Polarity is also at the heart of many algorithms in computational geometry~\cite{preparata2012computational,boissonnat1998algorithmic}.

\section{Quadratic polarities, Legendre polarity, and transformations}\label{sec:polartransform}

\subsection{Generalized Legendre-Fenchel transformations}

In~\cite{f1614b9edf1d4404bf7b9a42a7338824}, generalized Legendre-Fenchel transformations (LFTs) of functions $F:\bbR^n \to \barbbR$ are axiomatically characterized as 
 invertible transformations $\calT$ such that
\begin{itemize}
    \item $F_1 \leq F_2 \implies \calT F_2 \leq \calT F_1$,
    \item $\calT F_1 \leq \calT F_2 \implies F_2 \leq F_1$.
\end{itemize}

It is proved that such transformations $\calT$ can be expressed using the ordinary Legendre-Fenchel transformation as
\begin{equation}
(\calT F)(\eta) = \mu F^{*}(E\eta+f)+\langle\eta,g\rangle +h
\end{equation}
where $\mu>0$, $E \in \GL(\bbR^{n}), f,g \in \bbR^{n}$ and $h\in \bbR$.
In plain words, the  general transformations $\calT$ are expressed as affine deformations of  the Legendre-Fenchel transform $F^{*}$.
We can state this result using the function graphs as follows:

\begin{equation}
\begin{bmatrix} 
    \eta \\
    F^{*}(\eta)
\end{bmatrix} \in \graph{F^{*}} \to \begin{bmatrix} 
    E^{-1}(\eta - f) \\
    \mu F^{*} + \langle \eta, E^{-1} g \rangle + h - \langle E^{-1} f,g\rangle 
\end{bmatrix} = \begin{bmatrix} 
    \eta' \\
    \calT F(\eta') 
\end{bmatrix} \in \graph{\calT F)}.
\end{equation}

This affine deformation is can be described in homogeneous coordinates as
\begin{equation}
\begin{bmatrix} 
    \eta \\
    F^{*}(\eta) \\
    1
\end{bmatrix} \in \graph{F^{*}} \to \begin{bmatrix} 
    E^{-1} & 0 & - E^{-1}f \\ (E^{-1}f)^\top & \mu & h - \langle E^{-1} f, g \rangle \\ 0 & 0 & 1
\end{bmatrix} \begin{bmatrix} 
    \eta \\ F^{*}(\eta) \\ 1 
\end{bmatrix} = \begin{bmatrix} 
    \eta' \\
    \mathcal{T}F(\eta') \\
    1
\end{bmatrix} \in \graph{\calT F}.
\end{equation}

\subsection{Generalized LFTs as ordinary LFTs}

In~\cite{nielsen_generalized_2025}, it is shown that those generalized LF transformations can equivalently be described as the ordinary LFT on affinely deformed functions:

\begin{equation}
(\calT F)(\eta)= (\mu F(A\theta + b))^{*}+\langle \theta,c \rangle + d,
\end{equation}
where $A \in \GL(n)$, $b,c \in  \bbR^{n}$ and $d \in \bbR$. 

Thus to get $\calT F$, we can first apply an affine deformation on the graph of $F$ and then compute the LFT on this deformed graph.
The relations between parameters $(\mu, E,f,g,h)$ and $(\mu, A, b,c,d)$ are detailed in \cite{nielsen_generalized_2025}.

\subsection{Quadratic polarities as transformed Legendre polarities}\label{sec:polardeform}

First, let us express an arbitrary quadratic polarity $\Delta_C$ from an ordinary Legendre polarity $\Delta_{\calL}$:

\begin{theorem}[Quadratic polarity as transformed convex body of the Legendre polarity] \label{th:artstein_avidan_generalization_projective}
Let $C \in \GL({n+2})$ and suppose there exists $\calM_T \in \GL({n+2})$ such that:
    $$
    C = C_{\calL} \calM_T^{-1},
    $$
		where $T$ is an affine deformation defined by its matrix $\calM_T = C^{-1}\, C_{\calL}$ as $T:[b] \mapsto \calM_T [b]$.
    Then for any $A \subset \bbR^{n+1}$, we have
    $$
    \Delta_C(A) = T(\Delta_{\calL}(A)).
    $$
\end{theorem}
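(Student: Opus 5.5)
The plan is a direct double inclusion obtained by substitution into the definition \eqref{eq:polarity_from_functional} of a quadratic polarity. By hypothesis $C = C_{\calL}\calM_T^{-1}$, so for every $[a]$ and $[b]$ the polarity functional factors as
\[
p_C(a,b) = [a]^\top C\,[b] = [a]^\top C_{\calL}\,\bigl(\calM_T^{-1}[b]\bigr).
\]
Hence $[b]$ satisfies the $C$-polar inequalities with respect to $A$ exactly when $\calM_T^{-1}[b]$ satisfies the Legendre-polar inequalities with respect to the same $A$.

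First I would fix $[b]\in\Delta_C(A)$ and set $[b'] := \calM_T^{-1}[b]$. For every $[a]\in A$ the identity above gives $[a]^\top C_{\calL}[b'] = [a]^\top C[b]\geq 0$, so $[b']\in\Delta_{\calL}(A)$. Then $[b] = \calM_T[b'] = T([b'])\in T(\Delta_{\calL}(A))$. Conversely, if $[b] = T([b'])$ with $[b']\in\Delta_{\calL}(A)$, then $[a]^\top C[b] = [a]^\top C_{\calL}\calM_T^{-1}\calM_T[b'] = [a]^\top C_{\calL}[b']\geq 0$ for all $[a]\in A$, so $[b]\in\Delta_C(A)$. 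Invertibility of $\calM_T$ is what makes these two directions mutually inverse, so $T$ is a bijection from $\Delta_{\calL}(A)$ onto $\Delta_C(A)$. I would also note that the definition $\calM_T = C^{-1}C_{\calL}$ is consistent with the hypothesis: it lies in $\GL(n+2)$ as a product of invertible matrices, and $C_{\calL}\calM_T^{-1} = C_{\calL}C_{\calL}^{-1}C = C$.

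The only real subtlety is the projective/homogeneous bookkeeping, and I expect it to be the main obstacle. In \eqref{eq:polarity_from_functional}, points of $A$ and $\Delta(A)$ are written with last coordinate $1$, but $\calM_T[b]$ need not have last coordinate $1$. I would handle this by treating all sets in $\bbP^{n+1}$ and the inequality up to positive rescaling. Applying $T$ means applying $\calM_T$ and then dehomogenizing whenever the last coordinate is nonzero. Rescaling $[b]$ by $\lambda$ multiplies $[a]^\top C[b]$ by $\lambda$, so the sign condition is preserved only for $\lambda>0$. The cleanest route is therefore to read the equivalence relation \eqref{eq:equivalence_relation} as positive scaling (oriented projective space), or to note that whenever $T$ is affine (last row of $\calM_T$ equal to $(0,\dots,0,1)$, as the statement calls it an affine deformation) no rescaling occurs at all. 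I would add a sentence stating this convention, after which the double inclusion above is the whole proof.
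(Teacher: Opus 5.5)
Your proof is correct and follows the paper's argument: you substitute $C = C_{\calL}\calM_T^{-1}$, set $[b'] = \calM_T^{-1}[b]$, and observe that $[a]^\top C[b] = [a]^\top C_{\calL}[b']$, so $[b]\in\Delta_C(A)$ exactly when $[b']\in\Delta_{\calL}(A)$; the paper writes this as a single chain of equivalences rather than a double inclusion. Your remark about the sign of the homogenizing coordinate (reading the equivalence as positive rescaling, or requiring $\calM_T$ to be genuinely affine) addresses a point the paper's proof passes over silently, and it is worth keeping.
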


\begin{proof}
    Let $[b] \in \Delta_C(A)$, and $[b']=\calM_T^{-1}[b]$
    \begin{align*}
        [b] \in  \Delta_{C}(A) &\iff \forall [a] \in A, \quad [a]^{\top} C [b] \geq 0 \\
        & \iff\forall [a] \in A, \quad [a]^{\top} C_{\calL} \calM_T^{-1} [b] \geq 0 \\
        & \iff\forall [a] \in A, \quad [a]^{\top} C_{\calL} \calM_T^{-1} \calM_T [b'] \geq 0 \\
        & \iff\forall [a] \in A, \quad [a]^{\top} C_{\calL} [b'] \geq 0 \\
        & \iff [b'] \in \Delta_{\calL}(A) \\
    \end{align*}
    Since $[b]=T([b'])$ we proved $\Delta_{C}(A)=T(\Delta_{\calL}(A))$
\end{proof}

We can also prove that a quadratic polarity is equivalent to the Legendre polarity on a deformed convex body:

\begin{theorem}[Quadratic polarity as Legendre polarity on deformed convex body] \label{th:nielsen_generalization_projective}
    Let $C \in \GL({n+2})$ and suppose there exists $\calM_S \in \GL({n+2})$ such that:
    $$
    C = \calM_S^{\mathrm{T}} C_{\calL}, 
    $$
		where		$S$ is an affine deformation defined by its matrix $\calM_S =  C_{\calL}\, C^{\top}$ as $S:[a] \mapsto \calM_S [a]$.
				Then for any $A \subset \bbR^{n+1}$:
    $$
    \Delta_C(A) = \Delta_{\mathcal{L}}(S(A)).
    $$
\end{theorem}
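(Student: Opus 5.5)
The argument mirrors the proof of Theorem~\ref{th:artstein_avidan_generalization_projective}, except that the deformation now sits on the left of $C_{\calL}$ and therefore acts on the primal points $[a]$ rather than on the dual points $[b]$. The key identity is the transposition rule
$$
[a]^\top \calM_S^\top C_{\calL} [b] = (\calM_S [a])^\top C_{\calL} [b],
$$
which moves the factor $\calM_S^\top$ onto the primal vector.

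First I would verify that the stated $\calM_S = C_{\calL} C^\top$ is consistent with $C = \calM_S^\top C_{\calL}$. We have $\calM_S^\top = C\, C_{\calL}^\top = C\, C_{\calL}$, since $C_{\calL}$ is symmetric. Hence $\calM_S^\top C_{\calL} = C\, C_{\calL}^2$. A direct block computation gives $C_{\calL}^2 = I_{n+2}$: the block $-I_n$ squares to $I_n$, and the $2\times 2$ swap block squares to $I_2$. This yields $C = \calM_S^\top C_{\calL}$, and $\calM_S \in \GL(n+2)$ because it is a product of invertible matrices. This also shows that the hypothesis ``suppose there exists $\calM_S$'' is automatically satisfied.

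Next comes the chain of equivalences, for an arbitrary $[b]$:
\begin{align*}
[b] \in \Delta_C(A) &\iff \forall [a]\in A,\ [a]^\top \calM_S^\top C_{\calL} [b] \geq 0 \\
&\iff \forall [a]\in A,\ (\calM_S[a])^\top C_{\calL}[b] \geq 0 \\
&\iff \forall [a']\in S(A),\ [a']^\top C_{\calL}[b]\geq 0 \\
&\iff [b]\in \Delta_{\calL}(S(A)).
\end{align*}
The third equivalence uses only the definition $S(A)=\{\calM_S[a] : [a]\in A\}$, so that $[a]$ ranging over $A$ is the same as $[a']$ ranging over $S(A)$. No bijectivity is needed for this step.

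The only delicate point is normalization in homogeneous coordinates. The vector $\calM_S[a]$ may have last coordinate different from $1$, and rescaling it by a negative factor would flip the sign of the polarity inequality. I would handle this by interpreting $S(A)$ at the level of the homogeneous representatives $\calM_S[a]$ themselves, exactly as $T$ is treated in Theorem~\ref{th:artstein_avidan_generalization_projective}, since the polarity condition is defined directly on these vectors. If one insists on dehomogenizing, one needs the last row of $\calM_S$ to give a positive last coordinate on $A$. That holds, for example, when $S$ is a genuine affine map with last row $(0,\dots,0,1)$, matching the paper's description of $S$ as an ``affine deformation.'' I would add a remark to this effect and otherwise keep the proof at the purely algebraic level above.
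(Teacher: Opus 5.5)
Your proof is correct and matches the paper's approach. The paper states this theorem without a written proof, but its proof of Theorem~\ref{th:artstein_avidan_generalization_projective} is the same chain of equivalences: there the factor $\calM_T^{-1}$ is absorbed into $[b]$, and here you transpose $\calM_S^\top$ onto $[a]$.

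Your consistency check via $C_{\calL}^2=I_{n+2}$ is the identity used in Property~\ref{prop:relationsTS}, and your normalization caveat is well placed. The last row of $\calM_S=C_{\calL}C^\top$ is $(Ce_{n+1})^\top$, so $S$ is genuinely affine only when $Ce_{n+1}=e_{n+2}$. This fails, for example, for the matrix of Property~\ref{prop:parabola2circle}, and in such cases the identity must be read on the unnormalized representatives $\calM_S[a]$, as you propose.
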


The relationships between the two affine deformations $T$ and $S$ are characterized by the following property:

\begin{property}[Relationships between $T$ and $S$]\label{prop:relationsTS}
    The matrices $\calM_{T}$ and $\calM_{S}$ from Theorems \ref{th:artstein_avidan_generalization_projective}  and \ref{th:nielsen_generalization_projective} satisfy:
    $$
    \calM_{T} = C_{\calL}\, \calM_{S}^{-\top}\, C_{\calL}.
    $$
    and
    \begin{equation}
        \calM_{S} = C_{\calL}\, \calM_{T}^{-\top}\, C_{\calL}.
    \end{equation}
\end{property}

\begin{proof} Because $C_{\mathcal{L}}^{-1}=C_{\mathcal{L}}^{\top}=C_{\mathcal{L}}$
   \begin{align*}
    \mathcal{M}_{S} &= C_{\mathcal{L}} C^{\top} \\
    \mathcal{M}_{S}^{-\top} &= C_{\mathcal{L}} C^{-1} \\
    C_{\mathcal{L}} \mathcal{M}_{S}^{-\top} C_{\mathcal{L}} &= C^{-1} C_{\mathcal{L}} \\
    &= \mathcal{M}_{T} \\
   \end{align*}
   Equivalently we can show that $\mathcal{M}_{S} = C_{\mathcal{L}}\mathcal{M}_{T}^{-\top} C_{\mathcal{L}}$
\end{proof}

\section{Polar Fenchel-Young divergences}\label{sec:polarfy}

The Fenchel-Young divergence~\cite{acharyya2013learning} induced by a Legendre-type function $F: \Theta\to\barbbR$ between two dual parameters $\theta_1\in\Theta$ and $\eta_2\in H$ measures the gap of the Fenchel-Young inequality:
$$
Y_F(\theta_1:\eta_2) \eqdef F(\theta_1)+F^*(\eta_2)-\inner{\theta_1}{\eta_2}.
$$

We have $Y_F(\theta_1:\eta_2)\geq 0$ with equality if and only if $\eta_2=\nabla F(\theta_1)$ or equivalently that $\theta_1=\nabla F^*(\eta_2)$.
In information geometry, the Fenchel-Young divergence is a canonical divergence of dually flat spaces~\cite{amari_information_2016}.
The Fenchel-Young divergence finds numerous applications in machine learning (e.g., see~\cite{blondel2020learning,kinoshita2024provable,santos2025hopfield}).
The Fenchel-Young uses the dual mixed parameterization of parameters and equivalently amounts to dual Bregman divergences when rewritten using a single parameterization:
$$
Y_F(\theta_1:\eta_2)=B_F(\theta_1:\theta_2)=B_{F^*}(\eta_2:\eta_1),
$$ 
where a Bregman divergence~\cite{bregman_relaxation_1967} is defined by
$$
B_F(\theta_1:\theta_2)=F(\theta_1)-F(\theta_2)-\inner{\theta_1-\theta_2}{\nabla F(\theta_2)}.
$$

The duality $B_F(\theta_1:\theta_2)=B_{F^*}(\eta_2:\eta_1)$ is called the reference duality in information geometry~\cite{zhang2004divergence}.
A Bregman divergence can be interpreted geometrically as the vertical gap between the points $(\theta_1,F(\theta_1))$ of $\graph{F}$ and 
$(\theta_2, T_{\theta_1}(\theta_2))$ where  $T_{\theta_1}(\theta_2):= F(\theta_2)+\inner{\theta_1-\theta_2}{\nabla F(\theta_2)}$ is the graph of the tangent hyperplane of $\graph{F}$ at $\theta_2$.

Let the natural pairing between a vector $v=(v_1,\ldots,v_n)\in\bbR^n$ and a covector $l^*=(l_1^*,\ldots,l_n^*)\in\bbR_n$ be denoted by $\natural{v}{l^*}\eqdef\sum_{i=1}^n v^i l_i^*$.
The Fenchel-Young divergence can be expressed as
$$
Y_F(\theta_1:\eta_2^*) \eqdef F(\theta_1)+F^*(\eta_2^*)-\natural{\theta_1}{\eta_2^*},
$$ 
where $F^*: H\subset\bbR_n\to\barbbR$.
(In geometric mechanics, the Legendre transform is defined fiberwise using points on the tangent and cotangent bundles~\cite{leok2017connecting}.)

We can define the polar Fenchel-Young divergence using the Legendre polarity as follows:

\begin{definition}[Polar Fenchel-Young divergence]\label{def:breg_div}
Let $A$ be a convex subset of $\bbR^{n+1}$.
 For any pair of points $[a] \in A$ and $[b] \in \Delta(A) \subset \bbR_{n+1}$, we define the polar Fenchel-Young divergence between $[a]$ and $[b]$ as:
\begin{equation}
D_A(a:b) := [a]^\top\, C_{\calL}\, [b],\quad [a] \in A, [b] \in \Delta(A).
\end{equation}
\end{definition}

This definition can be understood geometrically as follows: 
The covector $[b] \in  \bbR_{n+1}$ represent a point in the dual space and can be represented by a polar hyperplane in primal space $\bbR^{n+1}$. 
Using homogeneous coordinates, this polar hyperplane of $[b]$ is given by its normal vector $C_{\calL}\, [b]$. 
The polar Fenchel-Young divergence between two points $[a]$ and $[b]$ is thus the projection of $[a]$ onto this normal vector $C_{\mathcal{L}}\, [b]$.

The polar Fenchel-Young divergences generalize the ordinary Fenchel-Young divergence in the following special case:

\begin{property}\label{prop:polarFYgen}
Let $A = \epi{F}$ where $F$ is a proper closed convex function. 
Let $[a] \in \partial A$ and $[b] \in \partial \Delta_{\calL}(A)$. 
The divergence $D_A(a:b)$ recovers the ordinary Fenchel-Young divergence:
\begin{equation}
D_A(a:b) = F(\theta) + F^*(\eta) - \langle \theta, \eta \rangle = Y_F(\theta:\eta).
\end{equation}
where $[a]$ corresponds to $(\theta, F(\theta))$ and $[b]$ corresponds to $(\eta, F^*(\eta))$.
\end{property}

\begin{proof}
    From Property \ref{prop:legendre_polar_is_legendre_transform} we know that $\Delta_{\mathcal{L}}(\epi{F})=\epi{F^{*}}$. For a closed and proper $F$ and $F^{*}$ we know that $\partial \epi{F}=\graph{F}$ and $\partial \epi{F^{*}}=\graph{F^{*}}$. So we can express $[a]$ and $[b]$ as follows :
for the primal point $[a] \in \partial \epi{F}$, we have $[a] = \begin{bmatrix} 
     \theta & F(\theta) & 1
\end{bmatrix}^\top$. For the dual point $[b] \in \partial \epi{F^*}$, we have $[b] = \begin{bmatrix} 
     \eta & F^*(\eta) & 1
\end{bmatrix} $. 

\begin{align*}
    D_A(a:b) &= [a]^\top C_{\mathcal{L}} [b] \\
    &= \begin{bmatrix} \theta^\top & F(\theta) & 1 \end{bmatrix} 
       \begin{bmatrix} -I_n & 0 & 0 \\ 0 & 0 & 1 \\ 0 & 1 & 0 \end{bmatrix} 
       \begin{bmatrix} \eta \\ F^*(\eta) \\ 1 \end{bmatrix} \\
    &= \begin{bmatrix} \theta^\top & F(\theta) & 1 \end{bmatrix} 
       \begin{bmatrix} -\eta \\ 1 \\ F^*(\eta) \end{bmatrix} \\
    &= -\langle \theta, \eta \rangle + F(\theta) + F^*(\eta)
\end{align*}
This matches the definition of the ordinary Fenchel-Young divergence (or equivalently, its corresponding Bregman divergence).
See Figure~\ref{fig:different_cases_bregdiv} for an illustration.
\end{proof}

\begin{figure}
       \centering
  \def\svgwidth{0.6\columnwidth}%
  \graphicspath{{.}}%
  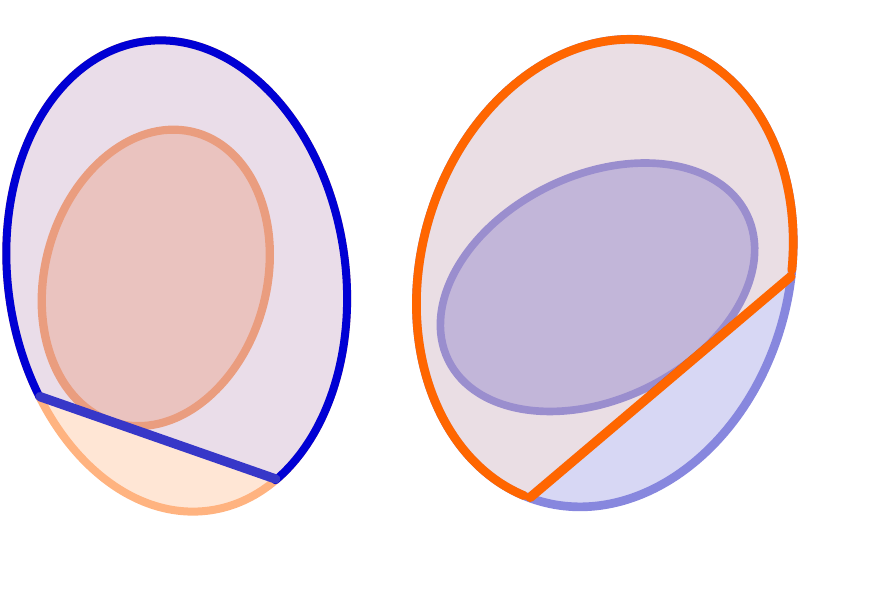%

				\caption{Ordinary Fenchel-Young divergence from the viewpoint of Legendre polarity: $[a] \in \partial A$ and $[b] \in \partial \Delta(A)$.}
        \label{fig:different_cases_bregdiv}
\end{figure}

\begin{property}
The polar Fenchel-Young divergences satisfy the following properties:
\begin{enumerate}
    \item $\forall [a] \in \bbR^{n+1}$ and  
		$[b] \in \Delta_{\calL}(\{a\})$, 
		we have $D_{\{a\}}(a:b) \geq 0$.
		
    \item $D_A(a:b)=0 \iff [b]$ lies on the polar hyperplane $H_{[a]}$ $\iff [a]$ lies on the polar $H_{[b]}$.
		
    \item If $[a] \in \partial A$, then $D_A(a:b)=0 \iff [b] \in \partial\Delta_{\calL}(A)$ (under strict convexity and smoothness assumptions).
\end{enumerate}

\end{property}

\begin{proof}
\begin{enumerate}
    \item By definition of the polar set $\Delta_{\mathcal{L}}(A)$, if $[b] \in \Delta_{\mathcal{L}}(A)$, then for all $[a] \in A$, we have $[a]^\top C_{\mathcal{L}}[b] \geq 0$.
    \item By definition, the polar hyperplane of $[a]$ is the set of points $[x]$ such that $[a]^\top C_{\mathcal{L}} [x] = 0$. Same for polar of $[b]$.
    \item If $[a ]\in \partial A$ and $A$ is strictly convex and smooth, the hyperplane polar to $[a]$ is a supporting hyperplane to the convex set $\Delta_{\mathcal{L}}(A)$. The intersection point of a closed convex set and its supporting hyperplane exists and lies on the boundary. Thus, $[b]$ must be in $\partial \Delta_{\mathcal{L}}(A)$.
\end{enumerate}
\end{proof}

The parameter swapping property of dual Bregman divergences $B_F(\theta_1:\theta_2)=B_{F^*}(\eta_2:\eta_1)$ is expressed in the polar Fenchel-Young divergences as follows:

\begin{property}[Swap of arguments]\label{prop:swap_bregman_arguments}
For any set $A$, $[a] \in A$ and $[b] \in \Delta_{\calL}(A)$, the following equality holds:
\begin{equation}
D_A(a : b) = D_{\Delta_{\calL}(A)}(b : a).
\end{equation}
\end{property}

\begin{proof}
The divergence on the dual set is defined as $D_{\Delta_ \mathcal L(A)}(b:a) = [b]^\top C_{\mathcal{L}}^\top [a]$. Since the Legendre cost matrix $C_{\mathcal{L}}$ is symmetric ($C_{\mathcal{L}}^\top = C_{\mathcal{L}}$):
\begin{align*}
D_A(a : b) &= [a]^\top C_{\mathcal{L}} [b] \\
&= ([a]^\top C_{\mathcal{L}} [b])^\top \quad \text{(scalar transpose)} \\
&= [b]^\top C_{\mathcal{L}}^\top [a] \\
&= [b]^\top C_{\mathcal{L}} [a] \\
&= D_{\Delta_{\mathcal{L}}(A)}(b : a)
\end{align*}
\end{proof}

Figure~\ref{fig:different_cases_bregdiv_cas_in_definition} illustrates a polar Fenchel-Young divergence which extends the classical notion of Fenchel-Young divergence.

    \begin{figure} 
        \centering
  \def\svgwidth{0.6\columnwidth}%
  \graphicspath{{.}}%
  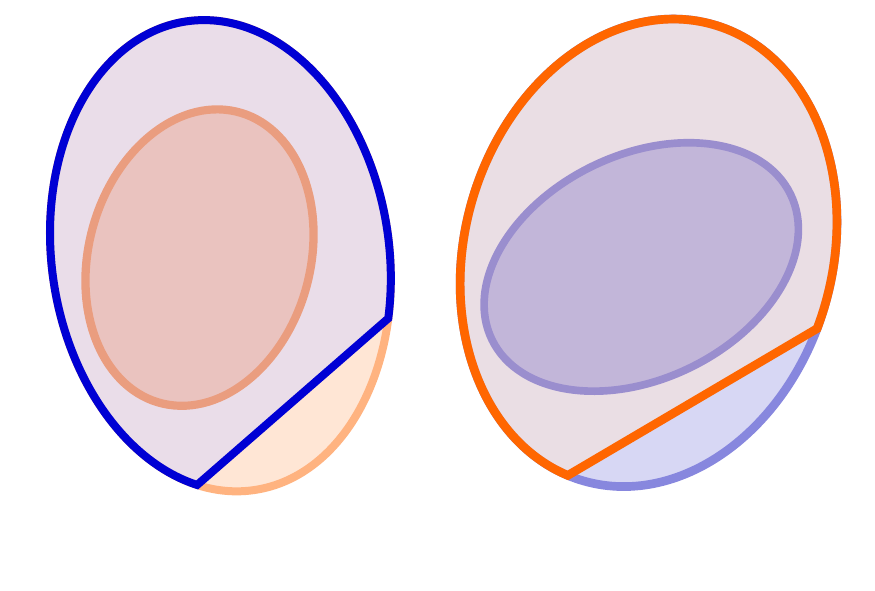%

        \caption{Generalized Fenchel-Young divergence: $[a] \in A$ and $[b] \in \Delta(A)$}
        \label{fig:different_cases_bregdiv_cas_in_definition}
    \end{figure}

\section{Polar total Fenchel-Young divergences}\label{sec:polartbd}

The polar Fenchel-Young divergence between $[a] \in  \bbR^{n+1}$ and $[b] \in  \bbR^{n+1}$ defined above can be seen as the projection of $[a]$ on the vector normal to the polar hyperplane of $[b]$. 
A natural idea would be to normalize this normal vector so that the divergence represent the distance between point $[a]$ and the  hyperplane polar of $[b]$. 
In affine space, this leads to the concept of total Bregman divergences introduced in~\cite{vemuri_total_2011}:
$$
\mathrm{tB}_F(\theta_1:\theta_2)= \frac{1}{1+\inner{\nabla F(\theta_2)}{\nabla F(\theta_2)}} B_F(\theta_1:\theta_2) =\kappa(\theta_2)\, B_F(\theta_1:\theta_2).
$$ 
The factor $\kappa(\theta_)=\frac{1}{1+\inner{\nabla F(\theta)}{\nabla F(\theta)}}$ is called the conformal factor~\cite{wong2018logarithmic} and the total Bregman divergence is a conformal divergence~\cite{nock2015conformal}.

Indeed, if we are interested in the distance in the affine plane $\lambda=1$ between the point $[a]$ and hyperplane of normal vector $C_{\calL} [b]$ we should normalize by $\kappa(b)=\sqrt{1+\left\|\eta_{b}\right\|^{2}}$. 
The factor $\kappa(b)$ corresponds to the norm of the projection in the affine plane of the projective vector $C_{\calL}[b]$. If we decompose $[b]=\begin{bmatrix}
    \eta_b & y_b & \lambda_b
\end{bmatrix}^\top$ it is the norm of the $n+1$ first components of the vector $\begin{bmatrix} 
    -I_n & 0 & 0 \\ 0 & 0 & 1 \\ 0 & 1 & 0 
\end{bmatrix} \begin{bmatrix} 
     \eta_b \\ y_b \\ 1
\end{bmatrix}= \begin{bmatrix} 
    -\eta_b \\ 1 \\ y_b 
\end{bmatrix}.$

Thus a generalization of the polar total Bregman divergence can be defined as follows:

\begin{definition}[Polar total Fenchel-Young divergence]\label{def:totalbreg_div}
$$
\tD_A([a],[b]):= \frac{1}{\kappa(b)} \, D_A([a],[b]).
$$
\end{definition}

 Equivalently for the dual polarity $\Delta^*: \bbR_{n+1}) \to \bbR^{n+1}$,
 we have 
$$
\tD_B([b],[a]) = \frac{1}{\kappa^*(a)} \, D_ B([b],[a]),
$$ 
with $\kappa^*(a)= \sqrt{1+\|\theta_a\|^2}$ the affine norm of vector $C_{\calL}^\top [a]$.

\begin{theorem}[Swap of parameters for polar total Bregman divergences]\label{thm:swaptbD}
    \begin{equation}
        \frac{1}{\kappa^*(a)} \, \tD_A([a]:[b]) = \frac{1}{\kappa (b)}, \tD_{\Delta_{\calL}(A)}([b]:[a]).
    \end{equation}
\end{theorem}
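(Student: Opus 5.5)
The identity should follow directly from Definition~\ref{def:totalbreg_div} combined with the swap property of Property~\ref{prop:swap_bregman_arguments}. The plan is to unfold both total divergences into the underlying polar Fenchel-Young divergences and check that both sides reduce to the same symmetric expression
$$
\frac{1}{\kappa^*(a)\,\kappa(b)}\,[a]^\top C_{\calL}\,[b].
$$

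\textbf{Step 1 (left-hand side).} By Definition~\ref{def:totalbreg_div}, $\tD_A([a]:[b]) = \frac{1}{\kappa(b)} D_A(a:b)$. Here $\kappa(b)=\sqrt{1+\|\eta_b\|^2}$ is the affine norm of the first $n+1$ components of $C_{\calL}[b]$, and $[b]$ is normalized so that $\lambda_b=1$. Multiplying by $1/\kappa^*(a)$ therefore gives $\frac{1}{\kappa^*(a)\kappa(b)}\,[a]^\top C_{\calL}[b]$.

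\textbf{Step 2 (right-hand side).} For the dual polarity I would use the dual form of the definition stated just before the theorem, with $B=\Delta_{\calL}(A)$:
$$
\tD_{\Delta_{\calL}(A)}([b]:[a]) = \frac{1}{\kappa^*(a)}\, D_{\Delta_{\calL}(A)}(b:a).
$$
Here $\kappa^*(a)=\sqrt{1+\|\theta_a\|^2}$ is the affine norm of $C_{\calL}^\top[a]$. Computing explicitly, for $[a]=[\theta_a^\top,y_a,1]^\top$ we get $C_{\calL}^\top[a]=[-\theta_a^\top,1,y_a]^\top$, whose first $n+1$ components have norm $\sqrt{\|\theta_a\|^2+1}$. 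This confirms the formula for $\kappa^*(a)$ is consistent with $\kappa(b)$.

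\textbf{Step 3 (combine).} Apply Property~\ref{prop:swap_bregman_arguments}, which is valid since $[a]\in A$ and $[b]\in\Delta_{\calL}(A)$ and uses the symmetry $C_{\calL}^\top=C_{\calL}$, to replace $D_{\Delta_{\calL}(A)}(b:a)$ by $D_A(a:b)$. The right-hand side then becomes $\frac{1}{\kappa(b)\kappa^*(a)} D_A(a:b)$, which equals Step 1.

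\textbf{Main obstacle.} The obstacle is notational rather than mathematical. One must make sure the conformal factors are always taken on the affine representatives, i.e.\ $\lambda_a=\lambda_b=1$, since otherwise homogeneous rescaling changes both $D$ and $\kappa$. One must also read the statement's stray comma as a product.

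It is worth remarking that the identity expresses a reference duality. Both sides equal the symmetric quantity $\tD_A([a]:[b])\,\kappa(b)/\bigl(\kappa^*(a)\kappa(b)\bigr)$, so the total divergences satisfy
$$
\kappa(b)\,\tD_A([a]:[b])=\kappa^*(a)\,\tD_{\Delta_{\calL}(A)}([b]:[a]),
$$
with the dual polar conformal factors $\kappa$ and $\kappa^*$ playing exchanged roles.
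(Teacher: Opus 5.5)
Your proposal is correct and follows essentially the same route as the paper. Both unfold $\tD_A([a]:[b])$ and $\tD_{\Delta_{\calL}(A)}([b]:[a])$ via Definition~\ref{def:totalbreg_div} and its dual form, then invoke Property~\ref{prop:swap_bregman_arguments} to see that both sides equal $D_A(a:b)/(\kappa^*(a)\kappa(b))$; your added checks on the affine normalization $\lambda=1$ and on the formula for $\kappa^*(a)$ are harmless extras.
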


\begin{proof}
\begin{align}
\frac{1}{\kappa^*(a)} \tD_A([a]:[b]) &= \frac{1}{\kappa^*(a)} \frac{1}{\kappa(b)} D_A([a]:[b])\\
    &=  \frac{1}{\kappa^*(a)} \frac{1}{\kappa(b)}D_{\Delta_\mathcal L(A)}([b]:[a]) \quad \text{ from Property \ref{prop:swap_bregman_arguments}}\\
    &= \frac{1}{\kappa(b)} \tD_{\Delta_\mathcal L(A)}([b]:[a]).
\end{align}
\end{proof}

\section{Summary and discussion}\label{sec:concl}

In this work, we have defined the Legendre polarity $\Delta_{\mathcal{L}}$ as the special quadratic polarity which allows one to reframe the Legendre-Fenchel transform.
The Legendre polarity is characterized by the fact that $\Delta_{\mathcal{L}}(\epi{Q})=\epi{Q}$ where $Q$ denotes the paraboloid of equation $Q(\theta)=\frac{1}{2}\sum_i \theta_i^2$. We have shown that generic quadratic polarities can be expressed from the canonical Legendre polarity either by deforming the input set or by deforming the Legendre polar. We then defined the Fenchel-Young polar divergence which extends the Fenchel-Young divergence~\cite{acharyya2013learning} and recover its key properties of non-negativeness and reference duality. We also showed that a natural normalization procedure in the polar Fenchel-Young divergence yields the total Fenchel-Young divergence which is equivalent to the total Bregman divergences~\cite{vemuri_total_2011}.

In optimal transport theory~\cite{villani2009optimal}, a dual formulation of the optimization problem involves the so-called $c$-transform~\cite{wong2018logarithmic} (where ``c'' stands for coupling):
\begin{equation}
\forall \eta \in  H, \quad F^{c}(\eta):= \inf_{\theta \in \Theta}\left(c(\theta,\eta) + F(\theta)\right).
\end{equation}

When $c(\theta,\eta)=-\langle \theta , \eta \rangle$, the $c$-transform becomes the Legendre-Fenchel transform. 

When $c$ is quadratic (i.e., $c(\theta,\eta)=\theta^{2}+ d\eta^{2}+e \theta^\top \eta + f \theta + g \eta + h$ with $(d,e,f,g,h) \in  \mathbb{R}^{5}$), we can describe $c$ by the top left block $C_{n} \in \mathbb{R}^{n}$ of the polarity cost matrix $C$:
\begin{equation}\label{eq:matrix_C_graph_form}
C= \left[\begin{array}{c c c} 
     C_n & 0 & 0 \\ 
     0 & 0 & 1 \\ 
     0 & 1 & 0
\end{array} \right]
\end{equation}

The $c$-transform is then described by the corresponding quadratic polarity $\Delta_C$.

%

\appendix

\section{Proof of involution of the polarity}\label{sec:appendix}

\begin{property}[Polarity involution]
Let $\Delta$ be the polarity associated with a nondegenerate matrix 
$C \in \GL(n+2)$. Then for any closed convex set
$A \subset \bbR^{n+1}$ we have
$$
\Delta_C^*(\Delta_C(A)) = A.
$$
\end{property}

\begin{proof}
$A \subset \Delta^*(\Delta(A))$  : Let $[a]\in A$ and $[b]\in \Delta(A)$.  
By definition of $\Delta(A)$,
\(
[a]^\top C[b]\ge 0.
\)
Since $[b]^\top C^\top[a]=[a]^\top C[b]$, we obtain
\(
[b]^\top C^\top[a]\ge 0.
\)
Hence $[a]\in \Delta^*(\Delta(A))$.

Let prove $\Delta^*(\Delta(A)) \subset A$.
Assume $[a]\notin A$. Because $A$ is a closed convex set and $[a] \notin A$, the hyperplane separation theorem guarantees the existence of a linear functional that strictly separates them. Then, because $C$ is non-degenerate, this functional can be represented as $[b]^\top C^\top$.
\[
[a]^\top C[b] < 0,
\qquad
[a']^\top C[b] \ge 0
\ \text{for all } [a']\in A .
\]
geometrically it means we can find a hyperplane separating $[a]$ and $A$. This works because the set is convex (so supporting hyperplane exists) and closed (so $[a]$ cannot be on the boundary of $A$). From the second condition we know $[b]\in \Delta(A)$, thus from first condition
\(
[b]^\top C^\top[a]<0,
\)
so $[a]\notin \Delta^*(\Delta(A))$.

Therefore $\Delta^*(\Delta(A))=A$.
\end{proof}

\section{Parabola to circle polarity}\label{sec:pcpolar}

Let us consider the polarity which transforms the paraboloid $Q$ into the unit sphere 
$\{(\theta,y) \in  \mathbb{R}^{n+1} \mid \theta_1^{2} + \cdots + \theta_n^{2} +y^{2}=1\}$

\begin{property}    
    [Polarity transforming parabola into circle]\label{prop:parabola2circle}
    If we define a polarity $\Delta_C$ via $C= \begin{bmatrix} 
        I_n & 0 & 0 \\
        0 & \frac{1}{\sqrt{2}} & \frac{1}{\sqrt{2}} \\
        0 & -\frac{1}{\sqrt{2}} & \frac{1}{\sqrt{2}} 
    \end{bmatrix} $ then for $\graph Q = \{(\theta, \frac{\theta^{2}}{2}), \theta \in \Theta \}$
    \begin{equation*}
      \Delta_C(\epi{Q}) = \left\{ (x,y) \in \mathbb{R}^{n}\times \mathbb{R} \mid x^{2}+y^{2} \leq 1 \right\} 
    \end{equation*}
    In other words this polarity transforms the epigraph of the canonical paraboloid into the unit hypersphere.
\end{property}  

\begin{proof}
    We will prove that the image of the epigraph of the parabola $Q : \theta \to \frac{\theta^{2}}{2}$ via polarity $\Delta_C$ gives the unit hypersphere $S := \left\{ \begin{bmatrix}
        x & y & \lambda
    \end{bmatrix}^\top \in \mathbb{R}^{n+1} \mid \left\| x \right\|^{2}+ y^{2} \leq  \lambda^{2} \right\} $. Let $\epi{Q}=\left\{ \begin{bmatrix}
        x & y & \lambda
    \end{bmatrix}^\top \in \mathbb{R}^{n+1} \mid \forall (y,\lambda) \in  \mathbb{R}_+^{2}, \left\|x\right\|^{2} \le 2 y \lambda \right\} $. By definition of the polarity $\Delta_C$ :
    \begin{align*}
      \Delta_C(\epi{Q}) = \left\{ {b} \in  \mathbb{R}^{n+1} \mid \forall [a] \in \epi{Q}, [a]^\top (C [b]) \geq 0 \right\} 
    \end{align*}
    Because $[a]^\top (C [b]) \geq 0$ for all $[a]$ in the cone $K = \epi{Q}$, the vector $C[b]$ must belong to the dual cone $K^*$. 
		Since the cone is self-dual under the standard inner product ($K = K^*$), it follows that $C[b] \in \epi{Q}$. Thus we have :
    \begin{align*}
      \left\|x_b \right\|^{2} & \leq 2 \left( \frac{ \lambda_b + y_b}{\sqrt{2}} \right) \left( \frac{\lambda_b - y_b  }{\sqrt{2}} \right) \\
      & \leq 2 \frac{\lambda_b ^{2} - y_b ^{2}}{2} \\
      & \leq \lambda^{2} - y_b ^{2}
    \end{align*}
    Rearranging gives
    $$
    \left\| x_b \right\|^{2} + y_b ^{2} \leq \lambda ^{2}
    $$
    This is exactly the definition of the hypersphere $S$.
\end{proof}

\bibliographystyle{plain}   
\bibliography{PolarityBIB}

\end{document}